\newcommand{\nop}[1]{}
\newtheorem{theorem}{Theorem}
\newtheorem{corollary}{Corollary}
\newtheorem{lemma}{Lemma}
\begin{document}

\title{Delay Bounds for Multiclass FIFO}

\numberofauthors{2}
\author{
\alignauthor
Yuming Jiang \\ 
       \affaddr{Norwegian University of Science and Technology }\\
       \email{jiang@ntnu.no}
\alignauthor
Vishal Misra \\
       \affaddr{Columbia University}\\
       \email{misra@cs.columbia.edu}
}

\maketitle

\begin{abstract}
FIFO is perhaps the simplest scheduling discipline. For single-class FIFO, its delay performance has been extensively studied: The well-known results include an upper bound on the tail of waiting time distribution for $GI/GI/1$ by Kingman \cite{Kingman64} and a deterministic delay bound for $D/D/1$ by Cruz \cite{Cruz91}. However, for multiclass FIFO, few such results are available. To bridge the gap, we prove delay bounds for multiclass FIFO in this work, considering both deterministic and stochastic cases. Specifically, delay bounds are presented for multiclass $D/D/1$, $GI/GI/1$ and $G/G/1$. In addition, examples are provided for several basic settings to demonstrate the obtained bounds in more explicit forms, which are also compared with simulation results. 
\end{abstract}

\section{Introduction}\label{sec-1}

Multiclass FIFO refers to the scheduling discipline where customers are served in the first-in-first-out (FIFO) manner and the services required by\nop{ customers of } different classes may differ. Compared to single-class FIFO where all customers typically have the same service requirements, multiclass FIFO is more general, providing a more natural way to model the system for scenarios where the service requirements of customers from different classes may differ. One example is downlink-sharing in wireless networks, where a wireless base station, shared by multiple users, sends packets to them in the FIFO manner. Since the characteristics of the wireless channel seen by these users may differ, the data rates to them may also be different. Another example is input-queueing on a \nop{router/}switch, where packets are FIFO-queued at the input port before being forwarded to the output ports that pick packets from the FIFO queue and serve at possibly different rates. A third example is video conferencing, where the video part stream and the audio part stream have highly different characteristics. However, the two streams are synchronized when they are generated. In addition, inside the network, they may share the same FIFO queues, e.g. in  network interface cards and in switches. 

Surprisingly, while there are a lot of results\nop{  in the literature} for single-class FIFO, few such results exist for multicalss FIFO. The existing results for multiclass FIFO are mostly under the classic queueing theory (e.g. \cite{ChenZ97}). However, those available results are rather limited and their focus has been mainly on the queue stability condition. In the context of input-queueing in packet-switched systems, multiclass FIFO has also been studied (e.g. \cite{Karol87}). However, in these studies, the focus has been on the throughput of the switch, assuming saturated traffic on each input port. None of these studies has focused on the delay\nop{ bound} performance of multiclass FIFO.  

In terms of delay bounds for FIFO, available results are almost all for single-class FIFO. Among them are the two well-known delay bound results: one by Kingman \cite{Kingman64} for the stochastic case $GI/GI/1$, and one by Cruz \cite{Cruz91} for the deterministic case $D/D/1$. For other or more general arrival and service processes, various delay bounds have also been derived, mainly in the context of network calculus (e.g. \cite{Chang00}\cite{NetCal}\cite{SNetCal}). In particular, for some multiclass settings studied in this work, analytical bounds on their single-class counterparts can be found in the literature (e.g. \cite{Ciucu07b}\cite{Jiang-valuetools09}). With those results, one might expect that they could be readily or easily extended to multi-class FIFO. Unfortunately, such an extension is surprisingly difficult and a direct extension may result in rather limited applicability of the obtained results. 

This work is devoted, as an initial try, to\nop{ filling} bridging the gap. The focus is on finding bounds (or approximations) for the tail distribution of delay\nop{ (in system)} or waiting time\nop{ (in queue)} in multiclass FIFO. 

The rest is organized as follows. In Section \ref{sec-2}, the difficulties in extending or applying single-class FIFO delay bound results to multiclass FIFO are first discussed, using the two well-known delay bounds as examples. Then in Section \ref{sec-3}, we prove delay bounds for multiclass FIFO, considering both deterministic and stochastic cases. Specifically, delay analysis is performed and delay bounds are derived for $D/D/1$, $GI/GI/1$ and $G/G/1$, all under multiclass FIFO. In Section \ref{sec-4}, examples are further provided for several basic settings to demonstrate the obtained bounds in more explicit forms and compare the obtained bounds with simulation results. 

\section{Difficulties} \label{sec-2}

Below we use the Kingman's bound and the Cruz's bound as examples to discuss the difficulties or limitations in applying the single-class\nop{ FIFO} delay bounds to multiclass FIFO. 

\subsection{Kingman's Bound}\label{sec-21}
For a $GI/GI/1$\nop{ FIFO} queue, the following delay bound on waiting time by Kingman \cite{Kingman64} is well-known: 
\begin{equation}\label{Kingman}
P\{W \ge \tau\} \le e^{-\vartheta \tau} 
\end{equation}
where $W$ denotes the steady-state waiting time in queue\nop{ of a customer}, 
$$\vartheta = \sup\{\theta>0: M_{Y(1)-X(1)}(\theta) < 1\}$$ 
with $X(1)$ being the interarrival time and $Y(1)$ the service time of a customer, and $M_{Z}(\theta)$ denotes the moment generating function\nop{ (MGF)} of random variable $Z$, i.e., $M_{Z}(\theta) \equiv E[e^{\theta Z}]$. 

The i.i.d. condition on interarrival times and the i.i.d. condition on service times imply that the bound (\ref{Kingman}) is mostly applicable only for single-class FIFO. Extending it or directly applying it to multiclass FIFO can be difficult. For example, for a multiclass FIFO system, even under that the two i.i.d. conditions hold for each class, the applicability of (\ref{Kingman}) may still be limited due to three main reasons:
\begin{itemize}
\item For the aggregate of all classes, the two i.i.d. conditions for the aggregate may not hold\nop{ even for some basic settings}. For example, for a two-class system with one class being $M/M$ and the other being $D/D$, the i.i.d. conditions, particularly the {\em identical} part, for the aggregate do not hold. 
\item There are cases where, even though within each class, customers are independent, there is dependence between classes. For example, in video conferencing, the video stream and the audio stream are synchronized when generated. As a result, the {\em independent} part of the i.i.d. condition, for the aggregate, may not be met. 
\item For cases where the two i.i.d. conditions for the aggregate also hold, it can still be {\em challenging} to find the probability distribution\nop{ functions or} or characteristic functions of the interarrival times and the service times of the aggregate class of customers, which are needed by (\ref{Kingman}). 
\end{itemize}

We remark that the above reasons also make other related single-class results (e.g. \cite{Ciucu07b}\cite{Jiang-valuetools09}), which rely on similar i.i.d. conditions, difficult to apply to multiclass FIFO. 

\subsection{Cruz's Bound} \label{sec-22}
For a $D/D/1$ \nop{FIFO }system in communication networks, the following delay bound was initially shown by Cruz\nop{ two and half decades ago} \cite{Cruz91}
\begin{equation}\label{Cruz}
D \le \frac{\sigma}{C}
\end{equation}
where $D$ denotes the system delay of any packet, $C$ (in bps) the service rate of the system, and $\sigma$ (in bits) the traffic burstiness parameter. 

The conditions of the Cruz's delay bound are $r \le C$ and that the input traffic during any time interval $[s,s+t]$, denoted as $A(s,s+t)$, for all $s, t\ge 0$, is upper-constrained by 
$A(s, s+t) \le r \cdot t + \sigma$. 

Unlike the Kingman's bound\nop{ (\ref{Kingman})}, the Cruz's bound may be readily used for multiclass FIFO. To illustrate this, consider a FIFO queue with $N$ classes, where the traffic of each class $n$ is upper-constrained by $A_n(s,s+t) \le r_n \cdot t + \sigma_n$ and the service rate of the class is $C_n$ with $r_n \le C_n$. 

Without difficulty, \nop{the Cruz' bound}(\ref{Cruz}) can be extended to this multiclass queue. Specifically, for the aggregate, there holds $A(s, s+t) \equiv \sum_n A_n(s,s+t) \le\sum_n r_n \cdot t + \sum_n \sigma_n$. Then, if there holds 
$$\sum_{n}r_n \le \min_{n} \{C_n\},$$
the delay of any packet is upper-bounded by 
$$D \le \frac{\sum_n \sigma_n}{\min_{n} \{C_n\}}.$$

Unfortunately, the condition $\sum_{n}r_n \le \min_{n} \{C_n\}$ \nop{bound $D \le \frac{\sum_n \sigma_n}{\min_{n} \{C_n\}}$ }can be too restrictive, particularly when the service rates differ much. As a result, the bound can be highly loose or even it cannot be used due to that the condition to use the bound is not met, as to be exemplified later in Section \ref{sec-4}.

\section{Main Results}\label{sec-3}

\subsection{System Model and Notation}
Consider a multiclass\nop{ FIFO} queueing system. There is only one queue that is initially empty.\nop{ at time $0$when there is no customer arrival or service.} Customers are served in the FIFO manner. If multiple customers arrive at the same time, the tie is broken arbitrarily. The size of the queue is unlimited. The serving part of the system is characterized by a work-conserving server. 

There are $N(\ge 1)$ classes of customers (e.g. packets in a communication network). Let $p_n^j$ denote the $j$th customer of class $n$, with $n \in [1,N]$ and $j=1, 2, \cdots$. Each customer $p_n^j$ is characterized by a traffic parameter $l_n^j$ that denotes the amount of traffic (in the number counted on a defined traffic unit, e.g. bits in the communication network setting) carried by the customer. To customers of class $n$, the service rate (in traffic units per second, e.g. bps - bits per second)  of the server is constant, denoted by $C_n$. 

For each class $n$, let $A_n(0,t) \equiv A(t)$ denote the amount of traffic (in traffic units, e.g. bits) that arrives within the time period $[0, t)$, and $A_n(s,t) \equiv A_n(t) - A_n(s)$  the traffic in $[s,t)$. For the aggregate traffic of all classes, $A(s,t) \equiv \sum_n A_n(s,t)$ and $A(t)$\nop{$A(t)=\sum_n A_n(t)$} are similarly defined. Also for each class $n$, we use $\lambda_n$ to denote the average customer arrival rate and $\mu_n$ the average customer service rate, and define $\rho_n = \frac{\lambda_n}{\mu_n}$.

For any customer $p_n^j$, let $a_n^j$ and $d_n^j$ respectively denote its arrival time and departure time. By convention, we let $a_n^{0}= d_n^{0}= 0$. The delay in system of the customer is then $D_n^j = d_n^j-a_n^j$, and the waiting time in queue is $W_n^j = D_n^j - l_n^j / C_n$. 

In addition, corresponding to the notation used in Section \ref{sec-21}, we use $X_n(j)$ to denote the interarrival time between $p_n^{j-1}$ and $p_n^j$, and $Y_n(j)$ the service time of $p_n^j$. By definition, $X_n(j) = a_n^{j} - a_n^{j-1}$, for $j=1, 2, \dots$, and  $Y_n(j) = l_n^j / C_n$. In this paper, we assume that for each class $n$, the processes $X_n(j)$ and $Y_n(j)$ are both stationary. Then by definition, we can also write $\lambda_n = 1/E[X_n(1)]$, $\mu_n = 1/E[Y_n(1)]$, and $\rho_n = E[X_n(1)]^{-1}E[Y_n(1)]$. 

Like single-class FIFO, the dynamics of the multiclass FIFO system are also described by, for all $j = 1, 2, \dots$, 
\begin{equation} \label{fifo-dyn}
d^{j} = \max(a^j, d^{j-1}) + l^j / C^j
\end{equation}
where $j=1, 2, \dots$ denotes the aggregate sequence of all customers ordered according to their arrival times, and $p^j, a^j, d^j, l^j$ and $C^j$ respectively denote the $j$th customer in this ordered aggregate sequence, its arrival time, departure time, carried traffic amount and received service rate. Similarly, we denote the delay of $p^j$ as $D^j = d^j-a^j$, and its waiting time in queue as $W^j = D^j - l^j / C^j$.

\subsection{Delay Bound for Multiclass $D/D/1$} 

Suppose that the traffic of each class $n$ is constrained by 
$
A_n(s, s+t) \le r_n t + \sigma_n
$ 
for all $s, t \ge 0$. For this multiclass $D/D/1$ queue, we have\nop{ the following tight delay bound}: 
\begin{theorem}\label{th-dd1}
If  $\sum_{n}\frac{r_n}{C_n} \le 1$, the delay of any customer $p^j$ is bounded by:
\begin{equation}\label{db-dd1}
D^j \le \sum_{n}\frac{\sigma_n}{C_n}
\end{equation}
and the delay bound is tight.
\end{theorem}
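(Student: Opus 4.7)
\noindent\textbf{Proof plan for Theorem \ref{th-dd1}.}

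The plan is to reduce the bound to a single busy-period argument, generalized from the classical Cruz proof but with the per-class rates $C_n$ replacing the common rate $C$. Fix an arbitrary customer $p^j$ with arrival time $a^j$ and departure time $d^j$, and let $s \le a^j$ be the start of the busy period of the server that contains $p^j$. By definition of a busy period, the server is working continuously on $[s, d^j]$, and just before $s$ the queue is empty. Because the discipline is FIFO and the server is work-conserving, the work completed during $[s, d^j]$ is exactly the work brought by all customers that arrived in $[s, a^j]$ (including $p^j$ itself, which is the last of these to leave under FIFO).

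The key observation, and the place where multiclass FIFO differs from the single-class case, is how to measure this work in time units. Each class-$n$ customer carrying $l$ bits occupies the server for $l/C_n$ seconds, so the time-valued workload brought in by class-$n$ arrivals in $[s, a^j]$ is $A_n(s, a^j]/C_n$. Summing over classes and equating with the length of the busy portion gives the identity
\begin{equation*}
d^j - s \;=\; \sum_{n=1}^{N} \frac{A_n(s, a^j]}{C_n}.
\end{equation*}
Applying the arrival constraint $A_n(s, a^j] \le r_n (a^j - s) + \sigma_n$ termwise yields
\begin{equation*}
d^j - s \;\le\; (a^j - s)\sum_{n} \frac{r_n}{C_n} + \sum_{n} \frac{\sigma_n}{C_n}.
\end{equation*}
Subtracting $(a^j - s)$ from both sides to form $D^j = d^j - a^j$ and invoking the hypothesis $\sum_n r_n/C_n \le 1$ to discard the nonpositive coefficient of $(a^j - s)$ produces the stated bound \eqref{db-dd1}.

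For tightness, I would exhibit an arrival pattern that saturates every inequality above simultaneously. The natural candidate is a single synchronized burst: start the system empty at time $0$ and let each class $n$ deliver exactly $\sigma_n$ bits at the instant $0$, with no subsequent arrivals. This obeys the constraint $A_n(s, s+t) \le r_n t + \sigma_n$ for all $s, t \ge 0$, the busy period begins at $s=0$, and under any FIFO tie-breaking the last customer to depart leaves at time $\sum_n \sigma_n/C_n$ with arrival time $0$, hitting the bound exactly.

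The one step that needs genuine care, rather than being routine, is the busy-period workload identity: in single-class FIFO the length $d^j-s$ equals the total arriving bits divided by $C$, but here the per-class service rates force the per-class decomposition $\sum_n A_n/C_n$. Once that identity is in hand, the rest is algebra and the hypothesis $\sum_n r_n/C_n \le 1$ is exactly what is needed to make the $(a^j-s)$ term drop out regardless of how far back in the busy period $p^j$ sits.
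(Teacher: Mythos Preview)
Your proposal is correct and follows essentially the same busy-period argument as the paper: identify the start $s$ of the busy period, bound $d^j - s$ by $\sum_n A_n(s,a^j]/C_n$, apply the per-class arrival constraints, and use $\sum_n r_n/C_n \le 1$ to kill the $(a^j - s)$ term; the tightness example is also the same synchronized-burst construction. The only cosmetic difference is that the paper writes the workload relation as an inequality $d^j - s \le \sum_n A_n(s,a^j_+)/C_n$ rather than an identity, which is slightly more robust when there are simultaneous arrivals at $a^j$ ordered after $p^j$, but the direction you need for the bound is unaffected.
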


\begin{proof}
For any\nop{ customer} $p^{j}$, there exists some time $t^0$ that starts the busy period where it is in. Note that such a busy period always exists, since in the extreme case, the period is only the service time period of $p^{j}$ and in this case, $t^0=a^{j}$.

Since the system is work-conserving, it is busy with serving customers in $[t^0, d^{j}]$. So, 
$d^{j} = t^0 + \sum_{n=1}^{N}Y_n(t^0, d^{j})$, 
where $Y_n(t^0, d^{j})$ denotes the total service time of class $n$ customers that are served in $[t^0, d^{j}]$. 
Because of FIFO and that the system is empty at $t^0_{-}$, $Y_n(t^0, d^{j})$ is hence limited by the amount of traffic that arrives in $[t^0, a^{j}]$, i.e., $A(t^0, a^{j}_{+})$, where $x_{-} \equiv x - \epsilon$ and $x_{+} \equiv x + \epsilon$ with $\epsilon \to 0$. Specifically, $Y_n(t^0, d^{j}) \le \frac{A_n(t^0, a^{j}_{+})}{C_n} $, 
so we then have 
$d^{j} \le t^0 + \sum_{n=1}^{N}\frac{A_n(t^0, a^{j}_{+})}{C_n}$. 
Under the condition $\sum_{n=1}^{N}\frac{r_n}{C_n} \le 1$, we then obtain:
\begin{eqnarray}
D^{j} &\le& \sum_{n=1}^{N}\frac{A_n(t^0, a^{j}_{+})}{C_n} + t^0-a^{j}  \label{pf1-4} \\ 
&\le& \sum_{n=1}^{N}\frac{r_n\cdot (a^{j}-t^0) + \sigma_n}{C_n} - (a^{j}- t^0) \le \sum_{n=1}^{N} \frac{ \sigma_{n}}{C_n} \nonumber  
\end{eqnarray}
where the second last step is due to the traffic constraint and the last step is from $\sum_{n}\frac{r_n}{C_n} \le 1$ and $a^{j} \ge t^0$. 

Note that, for the system, consider that immidiately after time $0$, every traffic class generates a burst with size $\sigma_n$. In this case, the customer in these bursts, which receives service last, will experience delay $\sum_{n=1}^{N} \frac{ \sigma_n}{C_n}$ that equals the delay bound. So, the bound is tight. 
\end{proof}
 

\subsection{Delay Bounds for Multiclass $GI/GI/1$}

To assist proving results for the ordinary multiclass $GI/GI/1$ system, we first consider a discrete time counterpart of the system, where time is indexed by $t=0, 1, 2, \dots$. The length of unit time is $\delta$. The discrete time system becomes the former by letting $\delta \to 0$. 

In the discrete time system, depending on the length of the unit time $\delta$, it could happen that multiple customers arrive at the same (discrete) time. Because of this, in addition to waiting time\nop{ and (system) delay}, we introduce the concept of virtual waiting time. The virtual waiting time at time $t$ is defined to be the time that a virtual customer, which arrives immediately before time $t$, would experience: {\em All arrivals at $t$ are excluded in the calculation of the virtual waiting time at $t$}. 

More specifically, for any $p^j$, the corresponding virtual waiting time $V^j$ can be written as:
\begin{equation}
V^{j} \equiv \sup_{0 \le s \le a^{j}} \left[ \sum_{n=1}^{N}\frac{A_n(s, a^{j})}{C_n} - (a^{j}-s) \right ]. \label{eq-v} 
\end{equation}
Note that the definition of virtual waiting time also applies to continuous time systems. If in a (continuous time or discrete time) system, there is at most one arrival at a time, then $V^{j}$ equals $W^{j}$, i.e. $W^{j} = V^{j}$.  

For the discrete-time counterpart of multiclass $GI/GI/1$, we have the following result. Its proof is in the Appendix. 

\begin{lemma}\label{th-gdi-d} 
For the discrete time system, if there exists small $\theta > 0$ such that $E[e^{\theta (\sum_{n=1}^{N}\frac{A_n(1)}{C_n} - 1)}] \le 1$, then, for any $p^j$ and for all such $\theta$, 
the virtual waiting time at $a^j$ and the delay of the customer are respectively bounded by,
\begin{eqnarray}
P \{ V^{j} \ge \tau \} &\le& M_{\sum_{n}\frac{A_n(1)}{C_n} - 1}(\theta) e^{-\theta \tau} \label{gd1-i-v}\\
P \{ D^{j} \ge \tau \} &\le& 1- {F}_{\frac{A_1(1)}{C_1}} \ast \cdots \ast  {F}_{\frac{A_N(1)}{C_N}} \ast  {F}_{{V}}(\tau) \label{gdi-i-d}
\end{eqnarray}
where $F_{X}$ denotes the CDF (or a lower bound on CDF) of $X$, ${F}_{{V}}(\tau) \equiv 1 - M_{\sum_{n}\frac{A_n(1)}{C_n} - 1}(\theta) e^{-\theta \tau}$, and  $\ast$ denotes the convolution operation.
\end{lemma}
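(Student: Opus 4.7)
The plan is to apply a Kingman-style Chernoff/martingale argument to the supremum representation (\ref{eq-v}) for the virtual waiting time, and then to relate $D^j$ to $V^j$ through a decomposition based on same-slot arrivals.

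For (\ref{gd1-i-v}), I first define the per-slot net-load increments $U_k \equiv \sum_{n=1}^{N} A_n(k-1,k)/C_n - 1$; under the GI/GI assumption together with cross-class independence these are i.i.d.\ with MGF equal to $M_U(\theta) \equiv M_{\sum_n A_n(1)/C_n - 1}(\theta)$. Rewriting (\ref{eq-v}) under the change of variable $k = a^j - s$ and invoking the stationarity of the $A_n$ processes, the distribution of $V^j$ coincides with that of $\max_{0 \le k \le a^j} S_k$, where $S_k = \sum_{i=1}^{k} U_i$ and $S_0 = 0$. The hypothesis $M_U(\theta) \le 1$ makes $e^{\theta S_k}$ a nonnegative supermartingale starting at $1$, and Doob's maximal inequality yields an exponential tail of the form $c\, e^{-\theta \tau}$. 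To tighten $c$ to the claimed $M_U(\theta)$, I would either split off the first increment and apply the maximal inequality to the remainder, or work directly from the Lindley recursion $V^{t+1} = (V^t + U_{t+1})^+$, bound $E[e^{\theta V^{t+1}}]$ in terms of $M_U(\theta)\, E[e^{\theta V^t}]$ plus a boundary contribution, and then pass to the stationary limit.

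For (\ref{gdi-i-d}), I would decompose $D^j = W^j + l^j/C^j$ and observe that, because the discrete-time model admits multiple arrivals in a single slot, $W^j$ exceeds $V^j$ by at most the service of the same-slot arrivals that precede $p^j$ in FIFO order. Bounding pessimistically, $D^j \le V^j + \sum_{n=1}^N A_n(\{a^j\})/C_n$, where $A_n(\{a^j\})$ denotes the class-$n$ arrival mass at slot $a^j$. By stationarity this sum has the same law as $\sum_n A_n(1)/C_n$, and by the GI/GI assumption combined with cross-class independence, $V^j$ (which depends only on pre-$a^j$ arrivals) and the per-class loads $A_n(1)/C_n$ are mutually independent. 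The tail of the sum is then obtained by convolving the component CDFs --- with the lower bound $F_V(\tau) = 1 - M_U(\theta)\,e^{-\theta \tau}$ standing in for the true CDF of $V^j$ --- yielding (\ref{gdi-i-d}).

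The main obstacle is pinning down the exact prefactor $M_U(\theta)$ in (\ref{gd1-i-v}): a plain Doob bound only delivers $c = 1$, and recovering the extra $M_U(\theta)$ factor requires exploiting the Lindley recursion in MGF form or a ladder-height decomposition. A secondary subtlety is handling the discrete-time boundary carefully --- distinguishing arrivals \emph{at} slot $a^j$ from arrivals \emph{in} $[a^j, a^j+1)$ when invoking stationarity --- and verifying that the required independence structure survives both the sup representation of $V^j$ and the convolution step.
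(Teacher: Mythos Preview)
Your approach matches the paper's. The paper defines the supermartingale $Z(k) = \exp\bigl\{\theta\bigl[\sum_n A_n(a^j-k,\,a^j)/C_n - k\bigr]\bigr\}$ indexed from $k=1$, so that Doob's maximal inequality immediately delivers the prefactor $E[Z(1)] = M_U(\theta)$; since the $k=0$ term in (\ref{eq-v}) equals zero it is irrelevant for $\{V^j \ge \tau\}$ with $\tau>0$, and starting at $k=1$ is precisely your ``split off the first increment'' idea --- no Lindley-recursion or ladder-height detour is needed. For (\ref{gdi-i-d}) the paper likewise writes $D^j \le V^j + \sum_n A_n(a^j, a^j_+)/C_n$, bounds the same-slot load by $A_n(a^j, a^j+1) =_{st} A_n(1)$, and convolves using the independence of $V^j$ (a function of arrivals strictly before $a^j$) from the per-class slot-$a^j$ arrivals, exactly as you outline.
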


For Lemma \ref{th-gdi-d}, we highlight that, though in the term ${F}_{\frac{A_1(1)}{C_1}} * \cdots *{F}_{\frac{A_N(1)}{C_N}}$ of (\ref{gdi-i-d}), the time index $1$ is used, the term is actually contributed by concurrent arrivals of the considered $p^j$, specifically by the total service time of all arrivals at time $a^j$, as shown in the proof. 
For the ordinary continuous time multiclass FIFO system, if there is at most one customer arrival at a time, the rest of this part disappears and the remaining is the contribution by the considered customer $p^j$, which is the service time of $p^j$.  Now by letting $\delta \to 0$, the following result immediately follows from Lemma \ref{th-gdi-d}. 

\begin{theorem}\label{th-gigi} 
For a multiclass $GI/GI/1$ system with no concurrent arrivals at any time, if there exists small $\theta > 0$ such that $E[e^{\theta (\sum_{n=1}^{N}\frac{A_n(1)}{C_n} - 1)}] \le 1$, then, 
the waiting time and delay of any customer $p^j$ are respectively bounded by,
\begin{eqnarray}
P \{ W^j \ge \tau \} &\le& e^{-\theta^{*} \tau} \label{gigi1-w}\\ 
P \{ D^j \ge \tau \} &\le& 1- {F}_{Y(j)} * {F}_{{W}}(\tau) \label{gigi1-d}
\end{eqnarray}
where ${F}_{Y(j)}$ is the CDF (or a lower bound on CDF) of the service time of the customer, ${F}_{{W}} = 1- e^{-\theta^{*} \tau}$ and 
$$\theta^{*} = \sup\{\theta>0: E[e^{\theta (\sum_{n}\frac{A_n(1)}{C_n} - 1)}] \le1\}.$$ 
\end{theorem}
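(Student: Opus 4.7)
The plan is to derive Theorem \ref{th-gigi} as the continuous-time limit $\delta \to 0$ of the discrete-time system to which Lemma \ref{th-gdi-d} applies. For each fixed $\delta > 0$, Lemma \ref{th-gdi-d} already delivers exponential tail bounds on the virtual waiting time $V^j$ and on $D^j$. The task reduces to showing that, in the limit, (i) $V^j$ coincides with $W^j$, (ii) the $N$-fold convolution in (\ref{gdi-i-d}) collapses to a single factor $F_{Y(j)}$, and (iii) the prefactor $M_{\sum_{n}A_n(1)/C_n - 1}(\theta)$ in (\ref{gd1-i-v}) can be absorbed so that taking the best $\theta$ produces $\theta^{*}$.

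For (i), under the assumption of no concurrent arrivals, for every fixed $j$ the probability that any other customer falls in the same $\delta$-slot as $p^j$ tends to zero as $\delta \to 0$; hence $V^j$ and $W^j$ agree almost surely in the limit, and (\ref{gd1-i-v}) passes directly to (\ref{gigi1-w}) once one notices that the admissibility constraint $E[e^{\theta(\sum_{n}A_n(1)/C_n - 1)}] \le 1$ forces the prefactor $M_{\sum_{n}A_n(1)/C_n - 1}(\theta) \le 1$, so the bound collapses to $e^{-\theta\tau}$; taking the supremum over admissible $\theta$ gives $\theta^{*}$. For (ii), the remark following Lemma \ref{th-gdi-d} identifies the convolution term $F_{A_1(1)/C_1} \ast \cdots \ast F_{A_N(1)/C_N}$ as the distribution of the total service time of customers arriving simultaneously with $p^j$; with no concurrent arrivals that contribution is exactly the service time $Y(j)$ of $p^j$ itself, and the product convolution degenerates to $F_{Y(j)} \ast F_{W}$, giving (\ref{gigi1-d}).

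The main obstacle I expect is making the $\delta \to 0$ limit uniformly rigorous: in particular, arguing that the discrete-time bound of Lemma \ref{th-gdi-d}, which a priori depends on $\delta$ through the per-slot arrival increments $A_n(1)$, translates consistently to the continuous-time quantities used in the theorem statement. The cleanest route, I believe, is to observe that the Lindley-style recursion underlying the proof of Lemma \ref{th-gdi-d} is already valid in continuous time via (\ref{eq-v}), so no genuine limiting argument is needed: the same supremum representation applied to the continuous-time arrival process, combined with Markov's inequality and the MGF condition as in Lemma \ref{th-gdi-d}, directly yields (\ref{gigi1-w}) and (\ref{gigi1-d}), with the ``no concurrent arrivals'' hypothesis used only to identify the service contribution of $p^j$ inside the delay bound.
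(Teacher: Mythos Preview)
Your proposal is correct and matches the paper's own argument: the paper derives Theorem~\ref{th-gigi} directly from Lemma~\ref{th-gdi-d} by letting $\delta \to 0$, using the no-concurrent-arrivals hypothesis to identify $V^j$ with $W^j$ and to collapse the $N$-fold convolution to the single service-time term $F_{Y(j)}$. Your explicit handling of the prefactor $M_{\sum_n A_n(1)/C_n - 1}(\theta) \le 1$ and of the supremum over admissible $\theta$ is in fact slightly more detailed than the paper, which simply states that the result ``immediately follows'' from the lemma.
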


\subsection{Delay Bounds for Multiclass $G/G/1$}

As discussed in Section \ref{sec-21}, when there are multiple classes\nop{ of customers}, even though the i.i.d. conditions may hold for each class, such i.i.d. conditions do not necessarily carry over to between classes. As a consequence, Theorem \ref{th-gigi} may not be applicable. To deal with this, we present the following bounds. 

\begin{theorem}\label{th-gg1} 
Suppose the traffic of each class has generalized Stochastically Bounded Burstiness (gSBB) \cite{Yin02} \cite{Jiang-comnet09}, satisfying for some $R_n >0$ and $\forall t >0$, 
$$P \{ \sup_{0 \le s \le t}[A_n(s, t) - R_n \cdot (t-s)] > \sigma \} \le \bar{F}_n(R_n, \sigma)$$
for all $\sigma \ge 0$.  
Then, for $\forall (R_1, \dots, R_N)$, under the condition 
$$\sum_{n}\frac{R_n}{C_n} \le 1,$$ 
the delay of any customer $p^j$ is bounded by (a.s.):
$$
P \{ D^j > \tau \} \le \inf_{p_1 + \cdots + p_N=1}\sum_{n=1}^N \bar{F}_n(R_n, p_n \cdot C_n \tau)
$$
and if the arrival processes of the $N$ classes are independent of each other, then the delay is bounded by (a.s.): 
$$
P \{ D^j > \tau \} \le 1- {F}_1 * \cdots *{F}_N(R_n, C_n\tau) 
$$
where ${F}_n(R_n, C_n \tau) \equiv 1 - \bar{F}_n(R_n, C_n \tau)$. 
\end{theorem}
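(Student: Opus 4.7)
The plan is to combine the busy-period argument already used for Theorem \ref{th-dd1} with the probabilistic envelope provided by gSBB. First I would fix any customer $p^j$ and let $t^0$ be the start of the busy period containing $p^j$. Exactly as in the proof of Theorem \ref{th-dd1}, work-conservation together with FIFO gives
$$D^j \le \sum_{n=1}^N \frac{A_n(t^0, a^j_+)}{C_n} - (a^j - t^0).$$

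The key substitution is to replace each deterministic envelope $r_n t + \sigma_n$ by the random quantity
$$S_n \equiv \sup_{0 \le s \le a^j}\bigl[A_n(s, a^j) - R_n(a^j - s)\bigr],$$
which by the gSBB hypothesis satisfies $P\{S_n > \sigma\} \le \bar F_n(R_n, \sigma)$. Since $t^0 \in [0, a^j]$ is a candidate in the supremum, $A_n(t^0, a^j) \le R_n (a^j - t^0) + S_n$. Substituting and collecting the terms linear in $(a^j - t^0)$, the bound above collapses to the pathwise inequality
$$D^j \le (a^j - t^0)\Bigl(\sum_{n} \tfrac{R_n}{C_n} - 1\Bigr) + \sum_{n=1}^N \tfrac{S_n}{C_n} \le \sum_{n=1}^N \tfrac{S_n}{C_n},$$
using $\sum_n R_n/C_n \le 1$ and $a^j \ge t^0$ in the last step.

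From this pathwise bound, the two claimed tail estimates are routine. For the general, possibly dependent, case I would apply a union bound: for any nonnegative weights $p_1,\dots,p_N$ summing to one,
$$\{D^j > \tau\} \subseteq \bigcup_{n=1}^N \{S_n/C_n > p_n \tau\},$$
which immediately yields $P\{D^j > \tau\} \le \sum_n \bar F_n(R_n, p_n C_n \tau)$, and taking the infimum over $(p_n)$ gives the first bound. When the $N$ arrival processes are independent, the $S_n$'s are independent, so the distribution of $\sum_n S_n/C_n$ is the convolution of the individual distributions; using $F_n(R_n, C_n \cdot) \le P\{S_n/C_n \le \cdot\}$ and passing to complements produces the stated convolution bound.

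The main obstacle is really the second step, namely choosing the right random variable $S_n$ so that (i) the telescoping with $R_n(a^j - t^0)$ survives even though $t^0$ is itself random, and (ii) the tail of $S_n$ is directly controlled by the gSBB hypothesis as stated. Taking the supremum over all $s \in [0, a^j]$, rather than just the specific $s = t^0$, achieves both at once at the price of a slight looseness; everything past that point is standard union-bound and convolution bookkeeping.
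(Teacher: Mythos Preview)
Your proof is correct and mirrors the paper's argument step for step: start from the busy-period inequality (\ref{pf1-4}), replace the random $t^0$ by a supremum over $s\in[0,a^j]$ so that the gSBB hypothesis controls each term, and finish with a union bound in the general case and a convolution in the independent case. The one bookkeeping point to tighten is that the busy-period estimate carries $A_n(t^0,a^j_+)$, not $A_n(t^0,a^j)$, so your $S_n$ should be defined with $a^j_+$ in place of $a^j$ as in the paper's (\ref{eq-db-4b}); the paper then applies gSBB at $t=a^j_+$ and lets $\epsilon\to 0$ to absorb the discrepancy.
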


\begin{proof}
Following the proof of Theorem~\ref{th-dd1}, we have obtained (\ref{pf1-4}). By applying\nop{ the condition} $\sum_{n=1}^{N}\frac{R_n}{C_n} \le 1$ to (\ref{pf1-4}), we further obtain
\begin{eqnarray}
D^{j} 
&\le & \sum_{n=1}^{N} \frac{A_n(t^0, a^{j}_{+}) - R_n\cdot(a^{j}-t^0)}{C_n}.  \label{eq-db-3a}
\end{eqnarray}

Note that in (\ref{eq-db-3a}), $t^0$ is a random variable. Taking all sample paths into consideration, with $A_n(t^0, a^{j}_{+}) - R_n\cdot(a^{j}-t^0) \le \sup_{0 \le s \le a^{j}}[A_n(s, a^{j}_{+}) - R_n\cdot(a^{j}-s)]$ , we get: 
\begin{eqnarray}\label{eq-2}
D^{j} 
&\le& \sum_{n=1}^{N} \frac{\sup_{0 \le s \le a^{j}}[A_n(s, a^{j}_{+}) - R_n\cdot(a^{j}-s)]}{C_n}. \label{eq-db-4b}
\end{eqnarray}

Since the traffic of each class has gSBB, with simple manipulation on the definition and applying $\epsilon \to 0$, we have,
\begin{eqnarray}
& & P \{ \frac{\sup_{0 \le s \le a^{j}}[A_n(s, a^{j}_{+}) - R_n\cdot(a^{j}_{+}-s)]+R_n \epsilon}{C_n}  > \tau \} \nonumber \\
&\le& \bar{F}_n (R_n, C_n \tau). \nonumber
\end{eqnarray}
The theorem then follows from probability theory results on sum of random variables. 
\end{proof}

We remark that it is easily verified that the deterministic traffic model is a special case of the gSBB model with $\bar{F}(x) = 0$ for all $x \ge \sigma_n$ and $\bar{F}(x) = 1$ otherwise. In addition, a wide range of traffic processes have been proved to have gSBB~\cite{Jiang-comnet09}.

\section{Examples} \label{sec-4}

To illustrate\nop{ the use of} the obtained delay bounds, this section presents examples for some basic settings, whose single-class counterparts have been extensively studied in the literature, with more explicit expressions for the delay bounds. 
In addition, the obtained bounds are compared with simulation results and discussed. 

\subsection{Multiclass $D/D/1$} 

Consider a multiclass FIFO queue in a communication network. Assume that there are two traffic classes, and for each class, packets have constant size $l_n$, which arrive periodically with $X_n$ being the period length. 

It is easily verified that for each class, its traffic arrival process satisfies 
$A_n(s,t) \le r_n \cdot (t-s) + \sigma_n$, 
with $r_n = l_n / X_n$ and $\sigma_n = l_n$. Applying them to Theorem \ref{th-dd1}, a delay bound is found as: If $\frac{r_1}{C_1}+\frac{r_2}{C_2} \le 1$, the delay of any packet satisfies: 
\begin{equation}\label{dd1-mc}
D \le \frac{l_1}{C_1} + \frac{l_2}{C2}.
\end{equation}

To compare, recall the delay bound directly from single class $D/D/1$ network calculus analysis discussed in Section \ref{sec-22}, which is, if $r_1+r_2 \le \min\{C_1, C_2\} $, 
\begin{equation}\label{dd1-direc}
D \le \frac{l_1+l_2}{\min\{C_1,C2\}}.
\end{equation}

To illustrate the two bounds, Figure \ref{fig:dd1} is presented, where two cases, Case 1 and Case 2, are considered\nop{ and the bounds are compared with simulation results}. For Case 1, $C_1 =20Mbps$ and $C_2 = 100Mbps$; $l_1=100$ bytes and $l_2=1250$ bytes; $X_1=0.1 ms$ and $X_2=1 ms$. For Case 2, the other settings are the same except that $C_1 = 10Mbps$. 

\begin{figure}[ht!]
    \centering
    \subfigure[Case 1]{
        \includegraphics[width=0.465\linewidth]{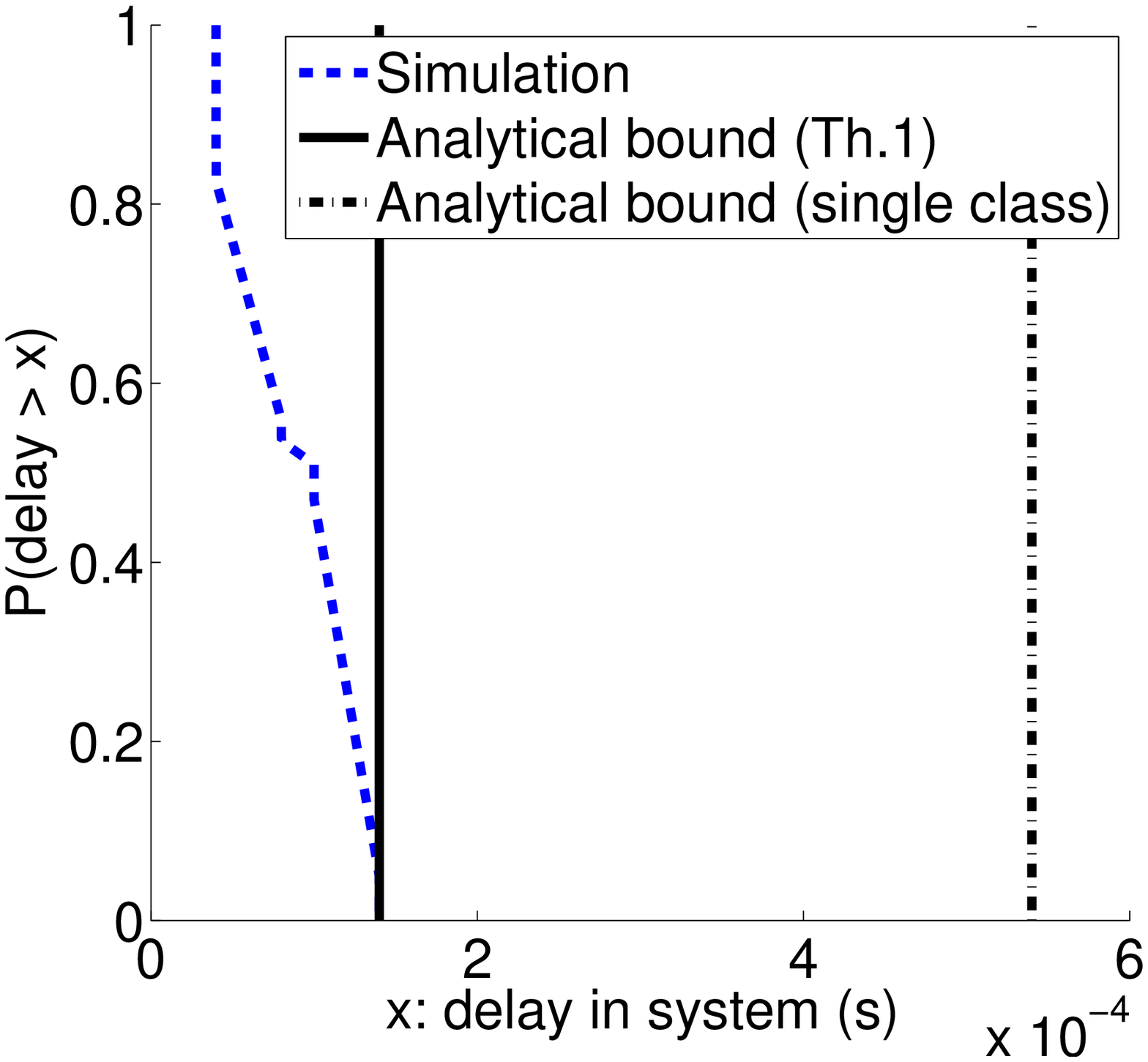} 
        \label{fig:case1}
    }
    \hfill
    \subfigure[Case 2] {
        \includegraphics[width=0.465\linewidth]{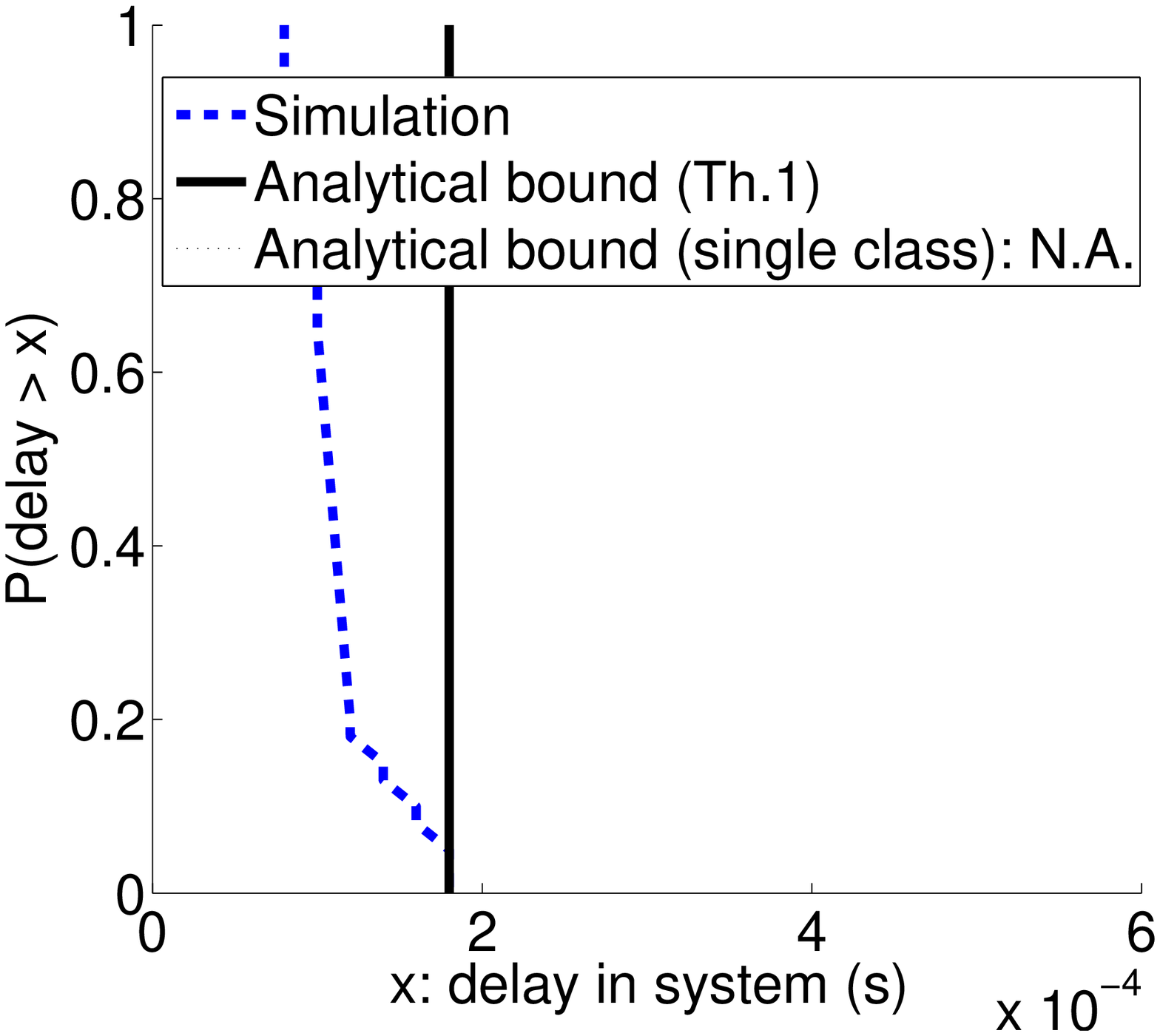} 
        \label{fig:case2}
   }
    \caption{Multiclass $D/D/1$}
    \label{fig:dd1}
\end{figure}

For both cases, $r_1 = 8Mbps$ and $r_2=10Mbps$. It is easily verified that while for both cases, the condition $\frac{r_1}{C_1}+\frac{r_2}{C_2} \le 1$ is satisfied, the condition $r_1+r_2 \le \min\{C_1, C_2\} $ is only met for Case 1. 

Figure \ref{fig:dd1} shows that, for both cases, the bound by Theorem \ref{th-dd1} is not only able to bound the delays of all simulated packets but also tight, i.e., some packets can indeed experience delay equal to the bound\nop{ (\ref{dd1-mc})}. However, \nop{the bound }(\ref{dd1-direc}) can be highly conservative as shown by Figure \ref{fig:case1} for Case 1, and may even not be applicable (N.A.) due to that its required condition is not met as indicated by Figure \ref{fig:case2} for Case 2.

\subsection{Multiclass $GI/GI/1$}

In this subsection, we give two examples for multiclass $GI/GI/1$, which are multiclass $M/D/1$ and multiclass $M/M/1$.  In both examples, customers of each class arrive according to a Poisson process with average interarrival time $X_n$\nop{ or equivalently average arrival rate $\lambda_n \equiv X_n^{-1}$}. In addition, customers of the same class have the same expected service time $Y_n$. However, while for $M/M/1$, the service time of each customer is exponentially distributed, it is constant for $M/D/1$. 

Similar to the $D/D/1$ example, for each of them, we try to give an expression for the tail of delay / waiting time in a closed-form format to help the use of the related bounds. In particular, we have the following corollaries, for which we assume the stability condition is met, i.e. $\rho\equiv\sum_{n}\frac{r_n}{C_n} < 1$. Their proofs are included in the Appendix. 

\begin{corollary}\label{cor:md1}
For multiclass $M/D/1$, if all classes are independent, then for any customer $p_n^j$, its waiting time satisfies: 
$P \{ W_n^{j} >  \tau \} \le e^{-\theta^* \tau}$ 
with
\begin{equation}
\theta^{*} = \sup\{\theta>0:  \sum_n \lambda_n (e^{\theta Y_n}-1)-\theta  \le 0\}
\end{equation} 
an approximation of which is
\begin{equation}
\theta^* = 2(1-\rho)\tau/(\sum_{n=1}^{N} X_n^{-1} Y_n^2). \label{md1-appr}
\end{equation} 
\end{corollary}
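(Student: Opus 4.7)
The plan is to obtain the exact bound by a direct application of Theorem \ref{th-gigi} and then derive the closed-form approximation by a second-order Taylor expansion of the resulting MGF condition. Since each class is an independent Poisson process, their superposition is also Poisson, so almost surely there are no concurrent arrivals; the processes $X_n(j)$ and $Y_n(j)$ are trivially stationary from the $M/D$ structure. Hence Theorem \ref{th-gigi} applies and the task reduces to evaluating $\theta^*$.

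To evaluate $\theta^*$, I would compute the condition $E[e^{\theta(\sum_n A_n(1)/C_n - 1)}] \le 1$ explicitly. In a unit-length continuous time interval (after taking $\delta \to 0$), the number of class-$n$ arrivals is $K_n \sim \mathrm{Poisson}(\lambda_n)$, so $A_n(1) = l_n K_n$ and hence $A_n(1)/C_n = Y_n K_n$. Using independence across classes and the Poisson MGF identity $E[e^{s K_n}] = \exp(\lambda_n(e^s - 1))$, this gives
\begin{equation*}
E\!\left[e^{\theta(\sum_n A_n(1)/C_n - 1)}\right] = e^{-\theta}\prod_{n=1}^{N} \exp\bigl(\lambda_n(e^{\theta Y_n}-1)\bigr).
\end{equation*}
Taking logarithms, the condition $\le 1$ is equivalent to $\sum_n \lambda_n(e^{\theta Y_n}-1) - \theta \le 0$, which combined with (\ref{gigi1-w}) yields $P\{W_n^j > \tau\} \le e^{-\theta^* \tau}$ with the stated supremum.

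For the closed-form approximation (\ref{md1-appr}), I would Taylor-expand $e^{\theta Y_n} \approx 1 + \theta Y_n + \tfrac{1}{2}\theta^2 Y_n^2$, which, using $\rho = \sum_n \lambda_n Y_n$, reduces the defining inequality to $-\theta(1-\rho) + \tfrac{\theta^2}{2}\sum_n \lambda_n Y_n^2 \le 0$; solving this quadratic for its positive root gives $\theta^* \approx 2(1-\rho)/\sum_n \lambda_n Y_n^2$, and substituting $\lambda_n = X_n^{-1}$ recovers (\ref{md1-appr}).

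The main (minor) subtlety is the continuous-time limit of Lemma \ref{th-gdi-d}: one must confirm that $A_n(1)$ indeed refers to traffic in a unit-length \emph{continuous} time interval so that $K_n$ is $\mathrm{Poisson}(\lambda_n)$. Stability $\rho<1$ then guarantees a strictly positive root, since the function $\sum_n \lambda_n(e^{\theta Y_n}-1)-\theta$ has slope $\rho-1 < 0$ at $\theta=0$, is convex, and diverges to $+\infty$ as $\theta \to \infty$; convexity also justifies taking the quadratic approximation's root as a lower bound/approximation for the true $\theta^*$.
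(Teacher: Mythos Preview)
Your proposal is correct and follows essentially the same route as the paper: identify $A_n(1)/C_n$ as a (compound) Poisson variable with MGF $\exp(\lambda_n(e^{\theta Y_n}-1))$, multiply across independent classes and take logs to obtain the exact condition, then Taylor-expand to second order for the closed-form approximation. Your additional checks (no concurrent arrivals via superposition of independent Poissons, existence of a positive root from convexity and $\rho<1$) go slightly beyond what the paper spells out but are consistent with it.
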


\begin{corollary}\label{cor:mm1}
For multiclass $M/M/1$, if all classes are independent, then for any customer $p_n^j$, its waiting time satisfies
$P \{ W_n^{j} > \tau \} \le e^{-\theta^* \tau}$
with
\begin{equation}
\theta^{*} = \sup\{0 < \theta < \min_n \mu_n: \sum_n\frac{\lambda_n}{\mu_n - \theta} \le 1\} \label{mm1-exact}
\end{equation} 
an approximation of which is 
\begin{equation}
\theta^* = (1-\rho)\tau/(\sum_{n=1}^{N} X_n^{-1} Y_n^2). \label{mm1-appr}
\end{equation} 
\end{corollary}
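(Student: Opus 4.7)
The plan is to specialize Theorem~\ref{th-gigi} to the multiclass $M/M/1$ setting, where the entire burden reduces to evaluating the MGF condition $E[e^{\theta(\sum_n A_n(1)/C_n - 1)}] \le 1$ explicitly. The key observation is that $A_n(1)/C_n$ is a compound Poisson sum: class-$n$ arrivals in the unit interval form a Poisson process of rate $\lambda_n$, and each arrival carries a service time $Y_n(k) = l_n^k/C_n$ which is exponential with rate $\mu_n = 1/Y_n$. Conditioning on the arrival count and using $M_{Y_n}(\theta) = \mu_n/(\mu_n-\theta)$ valid for $0 < \theta < \mu_n$, one obtains the standard compound-Poisson formula
\[
M_{A_n(1)/C_n}(\theta) = \exp\bigl(\lambda_n(M_{Y_n}(\theta)-1)\bigr) = \exp\Bigl(\frac{\lambda_n\theta}{\mu_n-\theta}\Bigr).
\]

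Next, I would invoke the between-class independence to factor the joint MGF, giving
\[
E\bigl[e^{\theta(\sum_n A_n(1)/C_n - 1)}\bigr] = \exp\Bigl(-\theta + \sum_n \frac{\lambda_n\theta}{\mu_n-\theta}\Bigr),
\]
so that the condition of Theorem~\ref{th-gigi} is equivalent, after taking logarithms and dividing by $\theta > 0$, to $\sum_n \lambda_n/(\mu_n-\theta) \le 1$ for $\theta \in (0,\min_n \mu_n)$. Non-emptiness of the feasible set under the stability assumption $\rho<1$ is immediate: at $\theta=0$ the sum equals $\rho<1$; moreover the sum is continuous and strictly increasing in $\theta$ on this interval and diverges as $\theta\uparrow\min_n\mu_n$, so a unique crossing point exists and is precisely the supremum. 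Taking that supremum recovers (\ref{mm1-exact}), and the waiting-time bound $P\{W_n^j > \tau\} \le e^{-\theta^*\tau}$ follows from Theorem~\ref{th-gigi}.

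For the closed-form approximation (\ref{mm1-appr}), I would Taylor-expand each summand about $\theta = 0$ via $\lambda_n/(\mu_n-\theta) = \rho_n + \lambda_n\theta/\mu_n^2 + O(\theta^2)$, keep only the linear term, and solve $\rho + \theta\sum_n \lambda_n/\mu_n^2 = 1$; substituting $\lambda_n = X_n^{-1}$ and $1/\mu_n = Y_n$ then yields $\theta^* \approx (1-\rho)/\sum_n X_n^{-1}Y_n^2$. The only real piece of work is the compound-Poisson MGF computation and checking that the feasibility interval $(0,\min_n\mu_n)$ is the correct domain; both are standard, and no additional structural argument beyond Theorem~\ref{th-gigi} is required.
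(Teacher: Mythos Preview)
Your proposal is correct and follows essentially the same route as the paper's proof: both identify $A_n(1)/C_n$ as compound Poisson with exponential summands, compute its MGF as $\exp(\lambda_n\theta/(\mu_n-\theta))$, use independence to factor, reduce the condition of Theorem~\ref{th-gigi} to $\sum_n \lambda_n/(\mu_n-\theta)\le 1$, and then linearize via a first-order expansion (the paper writes it as $\rho_n/(1-\theta/\mu_n)\approx\rho_n(1+\theta/\mu_n)$, which is algebraically the same as your Taylor step). Your added remarks on monotonicity and the existence of a unique crossing are extra rigor the paper omits, but the argument is otherwise identical.
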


\begin{figure}[ht!]
    \centering
    \subfigure[Case 3: $M/D/1$]{
        \includegraphics[width=0.465\linewidth]{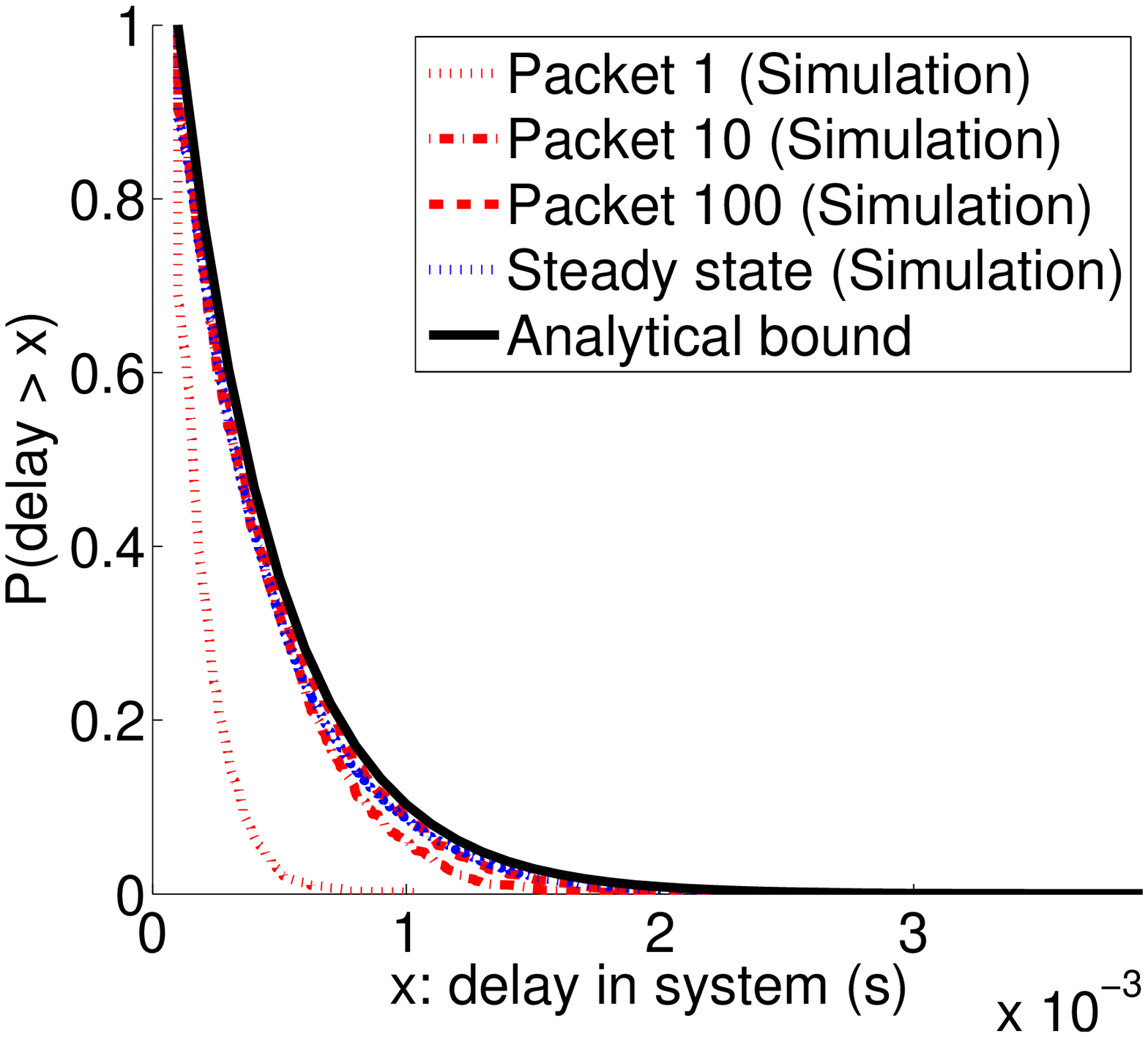} 
        \label{fig:case3}
    }
    \hfill
    \subfigure[Case 4: $M/M/1$] {
        \includegraphics[width=0.465\linewidth]{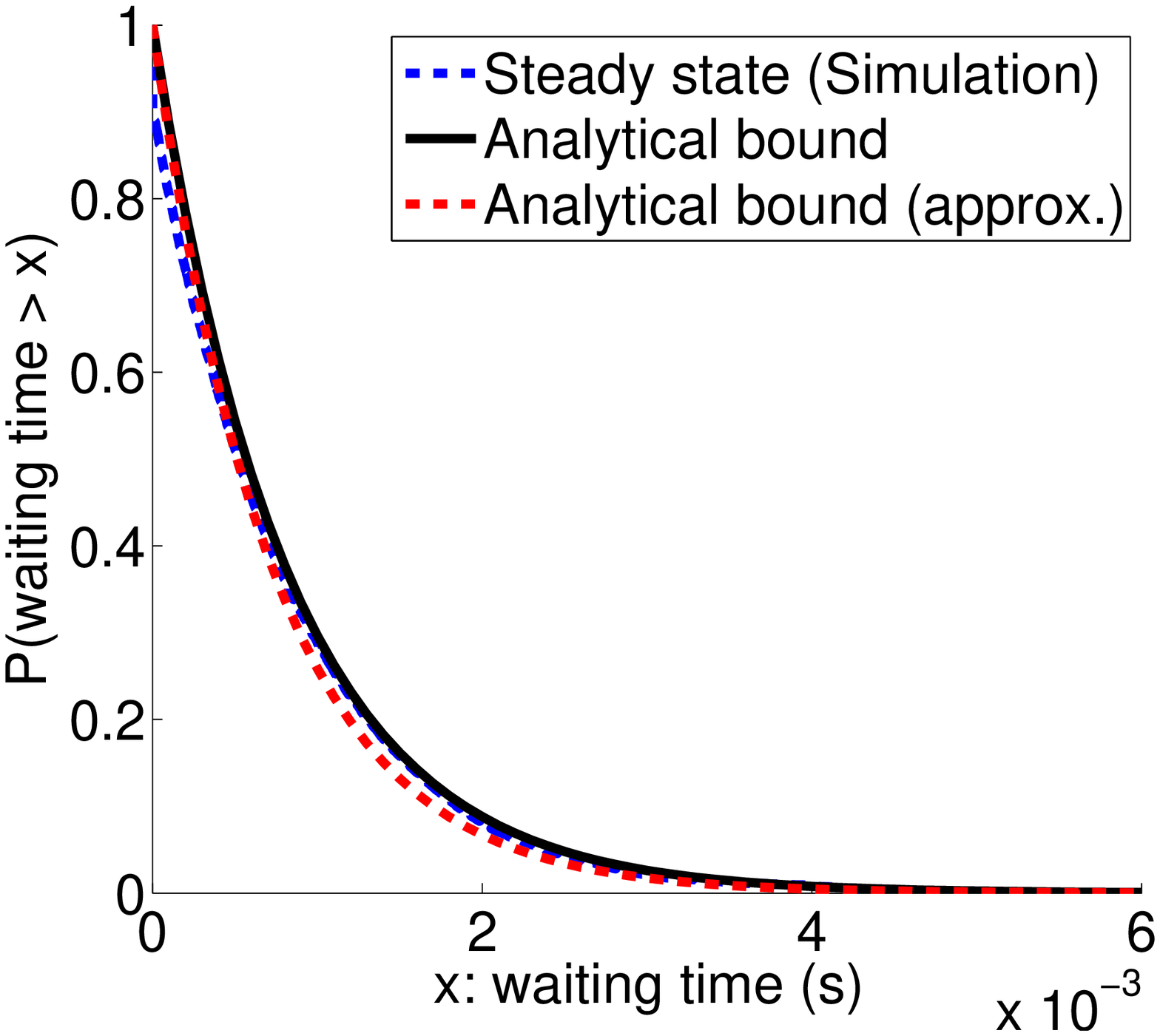} 
        \label{fig:case4}
   }
    \caption{Multiclass $GI/GI/1$}
    \label{fig:gi1}
\end{figure}

We remark that, with the bound on waiting time, a bound on delay can be easily obtained, e.g. for $M/D/1$, $P \{ D_n^{j} >  Y_n + \tau \} =P \{ W_n^{j} >  \tau \}$.
In addition, we remark that the single class version of the bounds in the above two corollaries resemble closely with the literature approximations for tail of delay / waiting time distribution in single class $M/G/1$, e.g., (2.9), (2.124) and (2.143) in \cite{Kleinrock76}.

To illustrate the bounds\nop{ by (\ref{md1-appr}) and  (\ref{mm1-appr})}, Figure \ref{fig:gi1} is presented, where two cases, Case 3 and Case 4, are respectively considered\nop{ and the bounds are compared with simulation results}. For Case 3, the other settings are the same as for Case 2, except that packets of each class arrive according to an independent Poisson process\nop{ with $X_n$ denoting the average interarrival time}, i.e. Case 3 is multiclass $M/D/1$. For Case 4, the other settings are the same as for Case 3, except that the traffic of each packet results in an exponentially distributed service time, i.e. Case 4 is multiclass $M/M/1$. 

In Figure \ref{fig:case3}, the delay CCDF simulation results of the 1st, the 10th and the 100th packet of Class 1, are included, in addition to the steady state delay CCDF and the analytical bound. Figure \ref{fig:case3} shows that the delays of packets are stochastically increasing as the packet number goes higher and they converge to the steady state distribution. This is as expected and it is a proven phenomenon for single-class FIFO (e.g. \cite{Kleinrock76}). For this reason, in later figures, only steady-state delay or waiting time distribution will be focused. In Figure \ref{fig:case4}, the curves are for Class 2, which include the simulated steady-state waiting time CCDF, the analytical bound based on (\ref{mm1-exact})\nop{ the root of (\ref{mm1-exact})} and the approximate analytical bound (\ref{mm1-appr}). 

Figure \ref{fig:gi1} shows that the analytical bounds are fairly tight and provide good approximations of the corresponding steady-state delay CCDF for both cases.

\subsection{Multiclass $G/G/1$}

In this subsection, we consider two examples where, even though within each class, the interarrival times and the service times are still respectively i.i.d., the overall FIFO system no more has i.i.d. interarrival times or i.i.d. service times. To ease expression, only two classes are considered. Corresponding to the two examples are Case 5 and Case 6.

The settings of Case 5 are the same as Case 3 except that some dependence\footnote{The same series of pseudo random numbers have been used in generating the interarrival times for both classes.} is introduced in the two Poisson arrival processes. To denote this case, we use $M^*/D/1$, where $^*$ indicates that some dependence exists among classes. 

In Case 6, Class 1 has the same settings of Class 1 in Case 2, while Class 2 has the same settings of Class 2 in Case 4. In other words, the system has two classes, where one is $D/D$ and the other is $M/M$. In Figure \ref{fig:gg1}, we denote the system by $DM/DM/1$. As discussed in Section \ref{sec-21}, in this $DM/DM/1$system, customers do not identical service time distribution. 

For the two cases, the following corollaries are obtained by directly applying Theorem \ref{th-gg1}  with the characteristics of the corresponding processes. The detailed proofs are omitted. 

\begin{corollary}\label{cor:gg1-1}
For the $M^*/D/1$ example, we have for any customer $p_n^j$,  
\begin{eqnarray}
P \{ W_n^{j} > \tau \} &\lessapprox& N e^{-\theta^* \tau / N}. \label{gg1-1}
\end{eqnarray} 
with $\theta^*$ as shown in (\ref{md1-appr}).
\end{corollary}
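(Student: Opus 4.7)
The plan is to apply Theorem \ref{th-gg1} directly to the $M^*/D/1$ system, since the dependence deliberately injected between the Poisson streams invalidates the independence hypothesis of Theorem \ref{th-gigi} and Corollary \ref{cor:md1}. The key observation is that, although the classes are correlated, each individual class is still an ordinary compound Poisson stream of constant-size packets, and therefore admits a per-class gSBB envelope via a standard Chernoff-type supremum bound. Theorem \ref{th-gg1} then fuses these per-class envelopes into a delay bound on the aggregate without ever invoking inter-class independence, at the cost of a union-bound-style factor that becomes the ``$N$'' in the final expression.

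The first substantive step is to compute the gSBB rate function for each class. For class $n$, the increment $A_n(s,s+t)$ is compound Poisson with moment generating function $\exp(\lambda_n t (e^{\theta l_n}-1))$, where $l_n$ is the deterministic packet size. A standard reflection/Chernoff argument then yields, for any envelope rate $R_n > r_n = \lambda_n l_n$,
\begin{equation*}
P\{\sup_{0 \le s \le t}[A_n(s,t) - R_n (t-s)] > \sigma\} \le e^{-\theta_n(R_n)\, \sigma},
\end{equation*}
where $\theta_n(R_n) = \sup\{\theta > 0 : \lambda_n(e^{\theta l_n}-1) \le \theta R_n\}$. Each class is therefore gSBB with $\bar{F}_n(R_n,\sigma) = e^{-\theta_n(R_n)\sigma}$, which is exactly the input required by Theorem \ref{th-gg1}.

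With these envelopes in hand, I would plug them into the non-independent bound of Theorem \ref{th-gg1} under the uniform split $p_1 = \cdots = p_N = 1/N$, giving $P\{D^j > \tau\} \le \sum_n e^{-\theta_n(R_n) C_n \tau / N}$. The remaining freedom lies in the envelope rates $R_n$, subject only to $\sum_n R_n/C_n \le 1$, so I would choose them to equalize the per-class exponents $\theta_n(R_n)\, C_n$ across $n$; by strict convexity of $e^{-x}$ this makes the uniform split optimal and every summand equal. Using the quadratic expansion $\lambda_n(e^{\theta l_n}-1) \approx \theta r_n + \theta^2 \lambda_n l_n^2/2$ valid for small $\theta$, the equalization is achieved by distributing the slack $1-\rho$ among the classes in proportion to $X_n^{-1} Y_n^2$; a short substitution confirms both $\sum_n R_n/C_n = \rho + (1-\rho) = 1$ and that the common exponent coincides, to the same order, with the $\theta^*$ of (\ref{md1-appr}). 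Collecting the $N$ equal terms delivers $N e^{-\theta^* \tau / N}$, and passing from $D_n^j$ to $W_n^j$ only subtracts the deterministic service time $Y_n$, which is absorbed into the ``$\lessapprox$''.

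The main obstacle, and the reason the corollary is written with $\lessapprox$ rather than $\le$, is this quadratic step: identifying $\theta_n(R_n)\,C_n$ with the $\theta^*$ of (\ref{md1-appr}) relies on truncating the compound-Poisson cumulant at second order, a move that is only faithful in the small-$\theta$ / near-heavy-traffic regime. A strictly rigorous version would either keep the exact Cram\'er root of $\lambda_n(e^{\theta l_n}-1) = \theta R_n$ (cleaner but less explicit) or carry an error term from the Taylor expansion; by contrast, the uniform splitting $p_n = 1/N$ is not a source of looseness at all once the per-class exponents have been equalized, so I would not attempt to further optimize over $(p_1,\dots,p_N)$.
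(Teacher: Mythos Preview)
Your proposal is correct and follows essentially the same route as the paper: a per-class exponential (Chernoff/supermartingale) bound on $\sup_{s\le t}[A_n(s,t)-R_n(t-s)]$, the union-type inequality from Theorem~\ref{th-gg1} with the equal split $p_n=1/N$, equalization of the per-class exponents via the choice of $(R_n)$ under $\sum_n R_n/C_n=1$, and the quadratic approximation that identifies the common exponent with the $\theta^*$ of (\ref{md1-appr}). The only cosmetic difference is packaging: the paper does not invoke Theorem~\ref{th-gg1} verbatim but instead restarts from (\ref{eq-db-4b})--(\ref{eq-db-4b-a}) and bounds each $\tilde W_n^j$ directly by the supermartingale argument of Lemma~\ref{th-gdi-d}, thereby working with the waiting time from the outset rather than passing from $D$ to $W$ at the end.
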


\begin{corollary}\label{cor:gg1-2}
For the $DM/DM/1$ example, we have for any customer $p_1^j$, 
\begin{equation}
P \{ W_1^{j}   > \tau \} \le e^{- (\mu_2 - \frac{\lambda_2}{1-\rho_1}) \tau}, \label{gg1-dm-1}
\end{equation}
and for any customer $p_2^j$
\begin{equation}
P \{ W_2^{j} - Y_1  > \tau \} \le e^{- (\mu_2 - \frac{\lambda_2}{1-\rho_1}) \tau} . \label{gg1-dm}
\end{equation}
\end{corollary}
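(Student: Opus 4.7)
The plan is to invoke Theorem~\ref{th-gg1} with a capacity-saturating allocation of the rates $R_n$ and then to derive each class's gSBB envelope in closed form. Set $R_1 = r_1 = \rho_1 C_1$ for the deterministic Class~1 and $R_2 = (1-\rho_1) C_2$ for the Poisson-exponential Class~2, so that $R_1/C_1 + R_2/C_2 = 1$ and the hypothesis of Theorem~\ref{th-gg1} holds with equality.

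Because Class~1 is periodic with packet size $l_1$ and period $X_1$, the deterministic envelope $A_1(s,t) \le r_1(t-s) + l_1$ holds always, so Class~1 is gSBB with $\bar F_1(R_1,\sigma) = 0$ for $\sigma \ge l_1$. Consequently the Class~1 term in the right-hand side of inequality (\ref{eq-db-4b}) from the proof of Theorem~\ref{th-gg1} is bounded by the constant $\sigma_1/C_1 = l_1/C_1 = Y_1$.

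For Class~2, I identify $S \equiv \sup_{0 \le s \le a^j}[A_2(s,a^j)/C_2 - (1-\rho_1)(a^j-s)]$ as the steady-state workload of an $M/G/1$ system with Poisson arrivals of rate $\lambda_2$, work requirements $\mathrm{Exp}(\mu_2)$, and server speed $c := 1-\rho_1$. A direct MGF computation gives
\begin{equation*}
E\bigl[e^{\theta(A_2(0,t)/C_2 - ct)}\bigr] \;=\; \exp\!\Bigl\{t\bigl[\lambda_2 \theta/(\mu_2 - \theta) - c\theta\bigr]\Bigr\},
\end{equation*}
which is at most $1$ exactly when $\theta \le \mu_2 - \lambda_2/(1-\rho_1)$. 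Setting $\theta^* := \mu_2 - \lambda_2/(1-\rho_1)$, a Kingman-style exponential-martingale argument (equivalently, the classical $M/M/1$ waiting-time formula after time scaling by $c$) yields the gSBB tail bound $P\{S > \sigma\} \le e^{-\theta^* \sigma}$.

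Substituting both envelopes into (\ref{eq-db-4b}) gives $D^j \le Y_1 + S$. For a Class~1 customer, $W_1^j = D_1^j - Y_1 \le S$, which produces (\ref{gg1-dm-1}) at once. For a Class~2 customer, $W_2^j = D_2^j - Y_{2,j} \le Y_1 + S - Y_{2,j}$, so $W_2^j - Y_1 \le S$ since $Y_{2,j} \ge 0$, and (\ref{gg1-dm}) follows. The main obstacle is the Class~2 step: recognizing the supremum as an $M/G/1$ workload with exponential jumps and extracting the clean closed-form exponential tail via the MGF/Chernoff machinery; once that is in hand, the deterministic envelope for Class~1, the application of Theorem~\ref{th-gg1}, and the bookkeeping from delay to waiting time are essentially mechanical.
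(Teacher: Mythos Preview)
Your approach is essentially the paper's: choose $R_1=r_1$ and $R_2=(1-\rho_1)C_2$, bound the Class~1 contribution deterministically by $Y_1=l_1/C_1$, and control the Class~2 supremum via a Chernoff/supermartingale (Kingman-type) argument to obtain the exponential rate $\theta^*=\mu_2-\lambda_2/(1-\rho_1)$. The paper does the same, invoking the machinery of Lemma~\ref{th-gdi-d}/Theorem~\ref{th-gigi} for the Class~2 term.

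There is one bookkeeping slip worth fixing. Inequality~(\ref{eq-db-4b}) involves $A_n(s,a^{j}_{+})$, not $A_n(s,a^{j})$. For a Class~2 customer $p^j$ one has $A_2(s,a^{j}_{+})=A_2(s,a^{j})+l_2^j$, so the correct intermediate bound is $D^{j}\le Y_1+S+Y_{2,j}$ rather than $D^{j}\le Y_1+S$; the latter is in fact false whenever $Y_{2,j}>Y_1+S$, which happens with positive probability since $Y_{2,j}$ is exponential. With the correct bound, subtracting $Y_{2,j}$ gives $W_2^{j}-Y_1\le S$ directly, and the ``since $Y_{2,j}\ge 0$'' step is unnecessary. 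For a Class~1 customer there is (a.s.) no Class~2 arrival at $a^{j}$, so $A_2(s,a^{j}_{+})=A_2(s,a^{j})$ and your $D^{j}\le Y_1+S$ is valid there; this matches the paper, which treats the two classes separately for exactly this reason.
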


\begin{figure}[t!]
    \centering
    \subfigure[Case 5: $M^*/D/1$]{
        \includegraphics[width=0.465\linewidth]{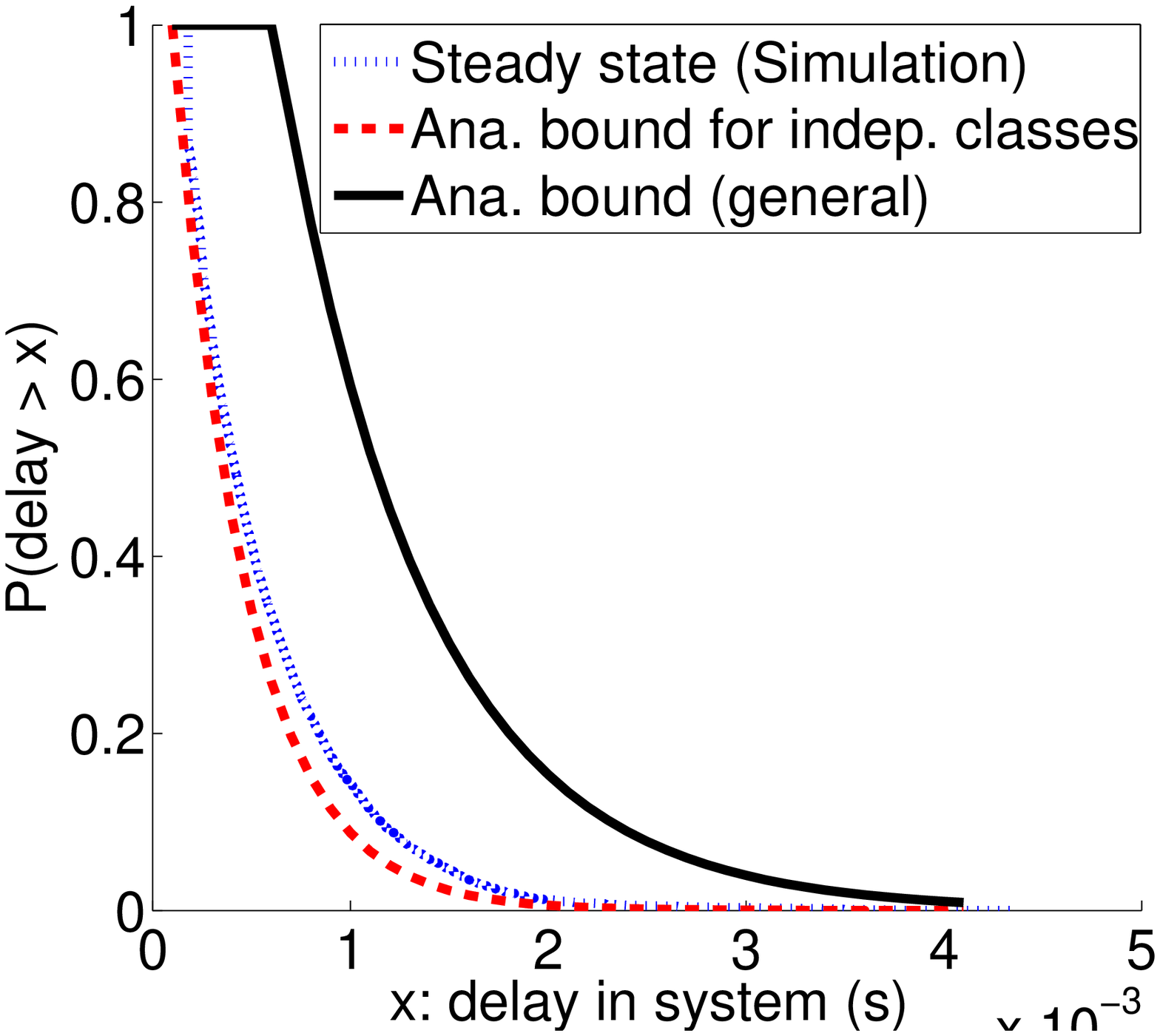} 
        \label{fig:case5}
    }
    \hfill
    \subfigure[Case 6:$DM/DM/1$] {
        \includegraphics[width=0.465\linewidth]{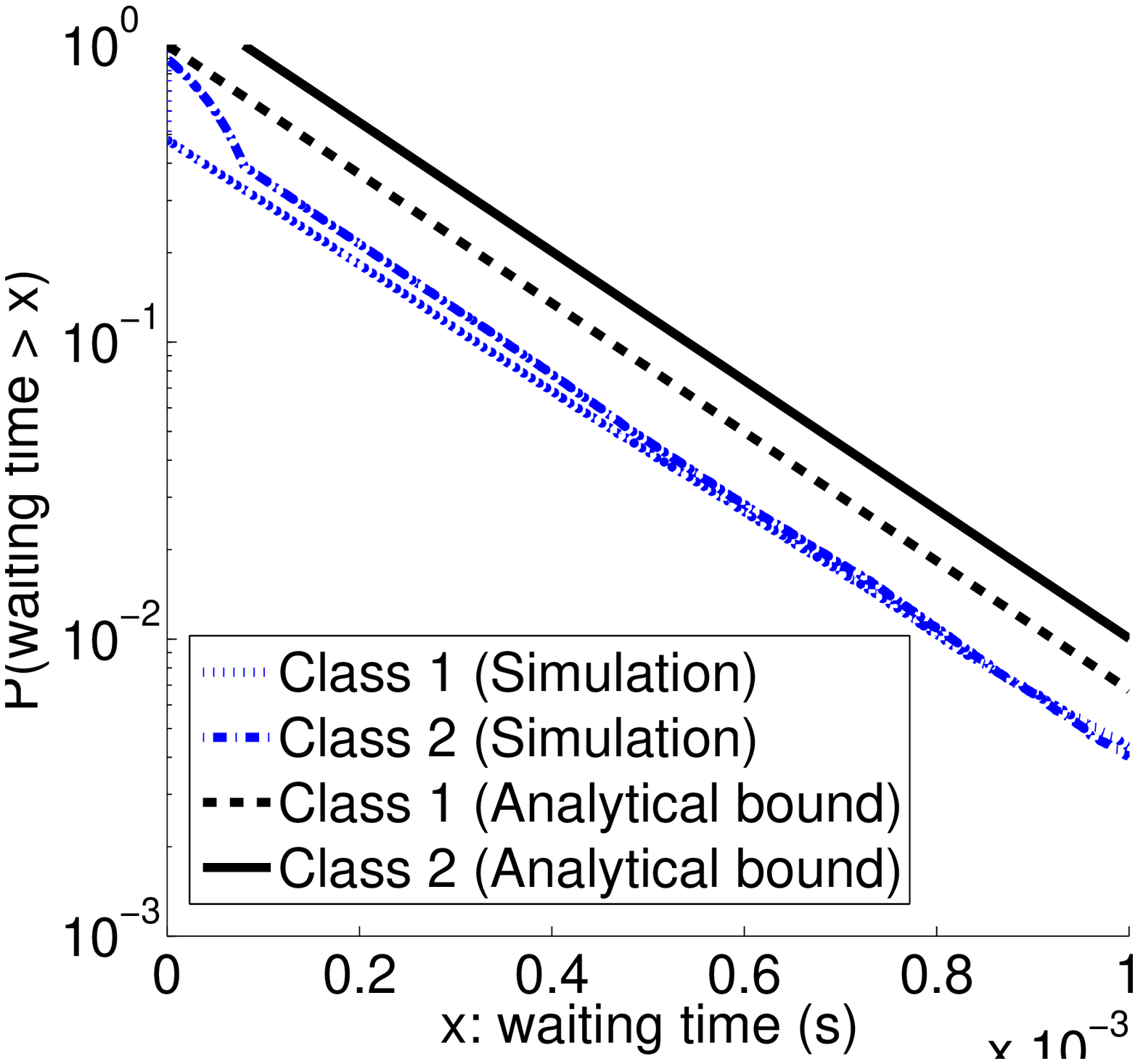} 
        \label{fig:case6}
   }
    \caption{Multiclass $G/G/1$}
    \label{fig:gg1}
\end{figure}

To illustrate the two bounds in Corollary \ref{cor:gg1-1} and Corollary \ref{cor:gg1-2}, Figure \ref{fig:gg1} presents results for the two cases, both for Class 2. Figure \ref{fig:case5} indicates that when there is dependence between the two classes, the analytical bound assuming independent classes is no more an upper-bound. However, the general analytical bound (\ref{gg1-1}) holds, even though there is a noticeable gap from the actual distribution. Note that the general bound holds for any possible dependence structure between the classes. By making use of the dependence information in the analysis, the analytical bound could be improved, but this is out of the scope of the present paper and we leave it for future investigation. 

Figure \ref{fig:case6} indicates an interesting phenomenon, which is that the (steady state) waiting time distributions of the two classes are different. Note that, here, the waiting time distribution has been intentionally used. In Case 3 - Case 5, where both classes have Poisson arrivals, the waiting time distribution is the same for both classes. However, in Case 6, while Class 2 still has Poisson arrivals, Class 1 has periodic arrivals. This arrival process difference results in the waiting time distribution difference: The waiting time observed by a Poisson inspector is different from that by a periodic inspector. Nevertheless, the bounds (\ref{gg1-dm-1}) and (\ref{gg1-dm}) are valid and are fairly good. Though improvement might be further made, the bounds provide an initial step towards the analysis of similar problems.



\appendix

\section{Proof of Lemma 1}
Our starting point is (\ref{pf1-4}). From (\ref{pf1-4}) and following the same argument as for (\ref{eq-db-4b}), the following inequality is readily obtained, which holds for all sample paths: 
\begin{eqnarray}
D^{j} &\le& \sup_{0 \le s \le a^{j}} \left[ \sum_{n=1}^{N}\frac{A_n(s, a^{j}_+)}{C_n} - (a^{j}-s) \right ] \label{eq-db-3a01} \\ 
&=&V^{j} + \sum_{n=1}^{N}\frac{A_n(a^{j}, a^{j}_+)}{C_n}
\end{eqnarray}

Define $Z(k) = e^{\theta[\sum_{n=1}^{N}\frac{A_n(a^{j}-k, a^{j})}{C_n}-k]}$, $k = 1, 2, \dots, a^{j}$, where $\theta >0$ is a constant. Then, under the condition $E[e^{\theta (\sum_{n=1}^{N}\frac{A_n(1)}{C_n} - 1)}] \le 1$, it can be proved that  $\{ Z(k) \}$ forms a supermartingale. 
\nop{
\begin{eqnarray}
&& E[Z(k+1) | Z(1), \dots, Z(k)] \nonumber \\
&=& E[e^{\theta (\sum_{n=1}^{N}\frac{A_n(a^{j}-k-1, a^{j}-k)}{C_n} - 1)}] Z(k)\nonumber \\
&=& E[e^{\theta (\sum_{n=1}^{N}\frac{A_n(1)}{C_n} - 1)}] Z(k)\nonumber \\
&\le& Z(k) \label{pr3-2}
\end{eqnarray}
and hence $\{ Z(k) \}$ forms a supermartingale. 
}
We now have, 
\begin{eqnarray}
P \{ V^{j} > \tau \}
&=& P \{ e^{\theta V^{j}} > e^{\theta \tau} \} \nonumber \\
&\le& P \{e^{\theta \sup_{s \le a^{j}} [ \sum_{n=1}^{N}\frac{A_n(s, a^{j})}{C_n} - (a^{j}-s)}] >  e^{\theta \tau}\} \nonumber \\
&=& P \{\sup_{1 \le k \le a^{j}} Z(k) >  e^{\theta \tau}\} \nonumber \\
&\le& E[Z(1)]e^{-\theta \tau} \label{pr3-3}
\end{eqnarray}
where the last step follows from the Doob's maximal inequality for supermartingale\nop{ \cite{Jiang-valuetools09} \cite{Jiang-note10}}. Since $E[Z(1)] = E[e^{\theta (\sum_{n=1}^{N}\frac{A_n(1)}{C_n} - 1)}] \equiv M_{\frac{A_n(1)}{C_n} - 1}(\theta)$, the first part is proved. 

For the second part, since $V^{j}$ and $\frac{A_n(a^{j}, a^{j}+1)}{C_n}$, $n=1, \cdots, N$, are independent, it follows from elementary probability theory results on sum of independent random variables and that $A_n(a^{j}, a^{j}_+) \le A_n(a^{j}, a^{j}+1)=_{st} A_n(1)$.

\section{Proof of Corollary 1} \label{sec:app1} 

Note that, $A_n(1)$ is a compound Poisson process with $A_n(1)=\sum_{i=1}^{\mathcal{N}_n(1)}l_n^i = \mathcal{N}_n(1) \times l_n$, where $\mathcal{N}_n(1)$ denotes the number of Class $n$ packets that arrive within a unit time.  In addition, since $C_n$ is constant, we can write $\frac{A_n(1)}{C_n}= \mathcal{N}_n(1) \times \frac{l_n}{C_n}$, which is also a compound Poisson with MGF: 
$$
E[e^{\theta \frac{A_n(1)}{C_n}}] = e^{\lambda_n (e^{\theta l_n/C_n}-1)}.
$$
Then, $M_{\sum_n\frac{A_n(1)}{C_n}-1} = e^{\sum_n \lambda_n (e^{\theta l_n/C_n}-1)-\theta}$, which implies that solving $M_{\sum_n\frac{A_n(1)}{C_n}-1} \le 1$ to get $\theta$ is equivalent to finding $\theta$ from:
\begin{equation}
\sum_n \lambda_n (e^{\theta l_n/C_n}-1) -\theta  \le 0 \label{tmp-p1}
\end{equation}
which proves the first part. 

With $e^{\theta l_n/C_n} \approx 1 +  \theta l_n/C_n + \frac{1}{2}\theta^2 (l_n/C_n)^2$ from Taylor expansion and $\theta > 0$, (\ref{tmp-p1}) can be rewritten as
$$
\theta \sum_n \lambda_n Y_n + \frac{\theta^2}{2}  \sum_n \lambda_n  Y_n^2-\theta \le 0.
$$
Since $\theta >0$, $\rho=\sum_n \lambda_n Y_n$ and $\lambda_n = X_n^{-1}$, we then get
$$
\theta \lessapprox 2(1-\rho)/(\sum_{n=1}^{N} X_n^{-1} Y_n^2). 
$$
Taking $\theta^* =2(1-\rho)/(\sum_{n=1}^{N} X_n^{-1} Y_n^2)$, the 2nd part is proved.  

\section{Proof of Corollary 2} \label{sec:app2} 

Note that, $A_n(1)$ is again a compound Poisson process with $A_n(1)=\sum_{i=1}^{\mathcal{N}_n(1)}Y_n(i)$ and similarly, $A_n(1)/C_n$ is also a compound Poisson. Since each $Y_n(i)=l_n^i/C_n$ has exponential distribution, the MGF of $A_n(1)$ can be written as, with $0 < \theta \le \min_n \mu_n$,  
$$
E[e^{\theta A_n(1) / C_n}] = e^{\frac{\lambda_n}{\mu_n - \theta} \theta}.
$$
Then, solving $M_{\sum_n\frac{A_n(1)}{C_n}-1} \le 1$ to get $\theta$ is equivalent to finding $\theta$ from:
$
\sum_n (\frac{\lambda_n}{\mu_n - \theta} \theta)-\theta  \le 0
$
and with simple manipulation, it becomes
\begin{equation}
\sum_n \frac{\lambda_n}{\mu_n - \theta} \le 1. \label{tmp-p2}
\end{equation}
which proves the first part. 

While (\ref{tmp-p2}) looks neat, finding an explicit expression for $\theta$ is not easy. In the following, we adopt an approximation approach. In particular,  
$$
\frac{\lambda_n}{\mu_n - \theta} =  \frac{\rho_n}{1 - \theta/\mu_n} \approx \rho_n (1 + \theta/\mu_n)
$$
applying which to (\ref{tmp-p2}) gives 
$
\rho + \theta \sum_n \frac{\rho_n}{\mu_n} \lessapprox 1
$
i.e., 
$$
\theta \lessapprox (1-\rho)/(\sum_{n=1}^{N} X_n^{-1} Y_n^2)
$$
since $\mu_n = Y_n^{-1}$ and $\lambda_n = X_n^{-1}$.
Then taking $\theta^* =(1-\rho)/(\sum_{n=1}^{N} X_n^{-1} Y_n^2)$, the 2nd part is proved. 

\section{Proof of Corollary 3} \label{sec:app3} 
Our starting point is (\ref{eq-db-4b}). Without loss of generality, suppose $p^j$ is a customer of class $n$. Due to also that all customers of the same class have the same service time $Y_n$ and that at time $a^{j}_{+}$, there is only one arrival that is $p^j$ and hence $W^j = D^j - Y_n$, we now have 
\begin{eqnarray}
W^{j} 
&\le& \sum_{n=1}^{N} \frac{\sup_{0 \le s \le a^{j}}[A_n(s, a^{j}_{+}) - R_n\cdot(a^{j}-s)]}{C_n}  - Y_n \nonumber \\
&=& \sum_{n=1}^{N} \frac{\sup_{0 \le s \le a^{j}}[A_n(s, a^{j}) - R_n\cdot(a^{j}-s)]}{C_n} \label{eq-db-4b-a} 
\end{eqnarray}
The right hand side of (\ref{eq-db-4b-a}) has $N$ items. Denote each as 
\begin{equation}
{\tilde{W}}_n^j = \frac{\sup_{0 \le s \le a^{j}}[A_n(s, a^{j}) - R_n\cdot(a^{j}-s)]}{C_n}. \label{eq-vw}
\end{equation}
Following the proof of the first part of Theorem 3, we have the following inequality that holds without any assumption on the potential dependence condition among classes,
\begin{equation}
P\{W^j >\tau\} \le \inf_{\sum_n p_n = 1} \sum_n P\{ {\tilde{W}}_n^j  > p_n \cdot \tau\} \label{eq-vw-2}
\end{equation}

We highlight that (\ref{eq-vw}) has a form similar to (\ref{eq-v}). Then following the same approach as for the proof of Lemma 1 and Theorem 2, we can get:
$$
P\{{\tilde{W}}_n^j \ge \tau \} \le e^{-\theta^{*}_{\omega_n} \cdot \tau}
$$
where $\omega_n \equiv \frac{R_n}{C_n}$, and $\theta^{*}_{\omega_n} $ is the solution of 
$$
E[e^{\theta(\frac{A(1)}{C_n} - \omega_n)}] = 1.
$$
For $M/D$, using similar approximation as for Corollary 1, 
\begin{equation}
\theta^{*}_{\omega_n} =2(\omega_n-\rho_n) / (X_n^{-1} Y_n^2). \label{eq-vw-theta}
\end{equation}
Finding the solution for $(\omega_1, \dots, \omega_N)$ from 
$$
2(\omega_1\_-\rho_1) / (X_1^{-1} Y_1^2) = \cdots = 2(\omega_N-\rho_N) / (X_N^{-1} Y_N^2).
$$
under the conditions $\omega_n \le \rho_n$ and $\sum_n \omega_n \le 1$, the resultant $\theta^{*}_{\omega_n}$ becomes $\theta^{*}$. 
Finally, (\ref{gg1-1}) is obtained by directly applying the resultant $P\{ {\tilde{W}}_n^j  > \tau\}$ to (\ref{eq-vw-2}). 

\section{Proof of Corollary 4} \label{sec:app4} 
If $p^j$ belongs to Class 2, we can start also from (\ref{eq-db-4b}). Following the same argument of (\ref{eq-db-4b-a}), we get
\begin{eqnarray}
W^{j} &\le& \sum_{n=1}^{N} \frac{\sup_{0 \le s \le a^{j}}[A_n(s, a^{j}) - R_n\cdot(a^{j}-s)]}{C_n}. 
\end{eqnarray}

Note that for Class 1, its customers arrive at $Y_1, 2Y_1, 3Y_1, \dots$, so, for any time period $[s, t]$, $A_1(s,t) \le r_1 \cdot (t-s) + l_1$, applying which to (\ref{eq-db-4b-a}), together with letting $R_1=r_1$, gives:
\begin{eqnarray}
W^{j} 
&\le& \frac{\sup_{0 \le s \le a^{j}}[A_2(s, a^{j}) - R_2\cdot(a^{j}-s)]}{C_2} + \frac{l_1}{C_1}. \nonumber 
\end{eqnarray}
Letting $R_2 = (1-\frac{R_1}{C_1})C_2 = (1-\rho_1) C_2$, we have
\begin{eqnarray}
W^{j} -Y_1
&\le& \sup_{0 \le s \le a^{j}}[\frac{A_2(s, a^{j})}{C_2} - (1-\rho_1)\cdot(a^{j}-s)]  \label{w-c2}
\end{eqnarray}

Following the same approach, a bound on $W^{j} -Y_1$ can be found as
$$
P\{W^{j} -Y_1 \ge \tau \} \le e^{-\theta^* \tau}
$$
where $\theta^*$ is the maximum $\theta$ satisfying
$$
E[e^{\theta A_2(1)/C_2 - (1-\rho_1)}] \le 1.
$$
Since Class 2 is $M/M$, applying it to $A_2(1)/C_2$ gives
$$
\frac{\lambda_2}{\mu_2 - \theta} - (1-\rho_1) \le 0
$$
from which, we can further get $\theta^{*} = \mu_2 - \lambda_2 / (1-\rho_1)$. 

However, if $p^j$ belongs to Class 1, we can start from (\ref{pf1-4}) and apply $A_1(s,t) \le r_1 \cdot (t-s) + l_1$ to it directly. What we then get is: 
\begin{eqnarray}
D^{j} &\le& \sum_{n=1}^{2}\frac{A_n(t^0, a^{j}_{+})}{C_n} + t^0-a^{j}  \\ 
&\le& \frac{r_1\cdot (a^{j}-t^0) + l_1}{C_1} + \frac{A_2(t^0, a^{j}_{+})}{C_2} - (a^{j}- t^0)  \nonumber  \\
&=& \frac{A_2(t^0, a^{j}_{+})}{C_2} - (1-\rho_1) (a^{j}- t^0) + \frac{l_1}{C_1} \nonumber  \\
&\le& \sup_{0 \le s \le a^{j}}[\frac{A_2(s, a^{j})}{C_2} - (1-\rho_1)\cdot(a^{j}-s)] + \frac{l_1}{C_1}.  \nonumber
\end{eqnarray}
Since $p^j$ belongs to Class 1, the above then gives 
\begin{eqnarray}
W^{j} 
&\le& \sup_{0 \le s \le a^{j}}[\frac{A_2(s, a^{j})}{C_2} - (1-\rho_1)\cdot(a^{j}-s)]  \label{w-c1}. 
\end{eqnarray}

Comparing (\ref{w-c1}) with (\ref{w-c2}), one can see the only difference is the $Y_1$ term on the left hand side of (\ref{w-c2}). Following the same approach, the waiting time distribution bound for Class 1 is obtained. 
\end{document}